\newtheorem{thm}{Theorem}[section]
\newcommand{\mrbrains}[1]{\includegraphics[width=0.19\linewidth]{Figures/MRBrains/#1}}
\newcommand{\mrbrainsretrain}[1]{\includegraphics[width=0.19\linewidth]{Figures/mrb_retrain/#1}}
\newcommand{\ppmi}[1]{\includegraphics[width=0.19\linewidth]{Figures/ImpactOfLambda/#1}}
\newcommand{\robust}[1]{\includegraphics[width=0.24\linewidth]{Figures/Robustness/#1}}
\newcommand{\update}[1]{#1}
\newcommand{\revision}[1]{\color{black}{#1~}\color{black}}
\newcommand{\beq}{\begin{equation}}
\newcommand{\eeq}{\end{equation}}
\renewcommand{\vec}[1]{\mathbf{#1}}
\newcommand{\mr}[1]{\mathrm{#1}}
\newcommand{\expect}{\mathbb{E}}
\newcommand{\real}{\mathbb{R}}
\newcommand{\ttimes}{\,$\times$\,}
\newcommand{\loss}{\mathcal{L}}
\newcommand{\lossDis}{\ell_{D}}
\newcommand{\lossSeg}{\ell_{S}}
\newcommand{\sid}{\mr{id}}
\newcommand{\lossEst}{\widetilde{\loss}}
\newcommand{\batch}{\mathcal{B}}
\newcommand{\dataImg}{\mathcal{X}}
\newcommand{\dataGT}{\mathcal{Y}}
\newcommand{\img}{\vec{x}}
\newcommand{\gt}{\vec{y}}
\newcommand{\same}{s}
\newcommand{\pred}{\widehat{s}}
\newcommand{\gtPix}{y}
\newcommand{\seg}{\widehat{\vec{y}}}
\newcommand{\segPix}{\widehat{y}}
\newcommand{\params}{\uptheta}
\newcommand{\one}{\mathbbm{1}}
\newcommand{\vone}{\mathbbm{1}}
\def\BibTeX{{\rm B\kern-.05em{\sc i\kern-.025em b}\kern-.08em
    T\kern-.1667em\lower.7ex\hbox{E}\kern-.125emX}}
\begin{document}
\title{Privacy-Net: An Adversarial Approach for Identity-Obfuscated Segmentation of Medical Images}
\author{Bach Ngoc Kim, 
        Jose Dolz, 
        Pierre-Marc Jodoin, 
        and Christian Desrosiers 
\thanks{This work is supported in part by the Natural Sciences and Engineering Research Council of Canada (NSERC) under Grant RGPIN-2017-758170 and RGPIN-2018-05715.}
\thanks{This work is supported in part by the Reseau de BioImagrie du Quebec (RBIQ) under Grant 247388.}
\thanks{This work is supported in part by NVIDIA corporation through their GPU grant program.}
\thanks{Bach Ngoc Kim is with \'{E}cole de Technologie Sup\'{e}rieure, Montreal, QC H3C 1K3, Canada (e-mail: bachknk49@gmail.com).}
\thanks{Jose Dolz is with \'{E}cole de Technologie Sup\'{e}rieure, Montreal, QC H3C 1K3, Canada (e-mail: jose.dolz@etsmtl.ca).}
\thanks{Pierre-Marc Jodoin is with Universit\'{e} de Sherbrooke, Sherbrooke, QC J1K 2R1, Canada (e-mail: pierre-marc.jodoin@usherbrooke.ca).}
\thanks{Christian Desrosiers is with \'{E}cole de Technologie Sup\'{e}rieure, Montreal, QC H3C 1K3, Canada (e-mail: christian.desrosiers@etsmtl.ca).}}

\maketitle

\begin{abstract}
This paper presents a client/server privacy-preserving network in the context of multicentric medical image analysis. Our approach is based on adversarial learning which encodes images to obfuscate the patient identity while preserving enough information for a target task. Our novel architecture is composed of three  components: 1) an encoder network which removes identity-specific features from input medical images, 2) a  discriminator network that  attempts  to  identify the subject from the encoded images, 3) a medical image analysis network which analyzes the content of the encoded images (segmentation in our case).  By simultaneously fooling the discriminator and optimizing the medical analysis network, the encoder learns to remove privacy-specific features while keeping those essentials for the target task. Our approach is illustrated on the problem of segmenting brain MRI from the large-scale Parkinson Progression Marker Initiative (PPMI) dataset. Using longitudinal data from PPMI, we show that the discriminator learns to heavily distort input images while allowing for highly accurate segmentation results. \update{Our results also demonstrate that an encoder trained on the PPMI dataset can be used for segmenting other datasets, without the need for retraining.} The code will be made available upon acceptance of the paper. 
\end{abstract}

\begin{IEEEkeywords}
Adversarial, Deep Learning, Medical Images, Privacy-preserving, Segmentation.
\end{IEEEkeywords}

\section{Introduction}
%
%
%
%
\IEEEPARstart{M}{achine} learning models like deep convolutional neural networks (CNNs) have achieved outstanding performances in complex medical imaging tasks such as segmentation, registration, and disease detection~\cite{Zhou2017,litjens2017survey}. However, privacy restrictions on medical data including images impede the  development of centralized cloud-based image analysis systems, a solution that has its share of benefits: no on-site specialized hardware, immediate trouble shooting or easy software and hardware updates, among others.  

While server-to-client encryption can prevent attacks from outside the system, it cannot prevent cybercriminals within the system from gaining access to private medical data.  Another approach to obfuscate the identity of a patient is to anonymize its data. In case of images, this is done by removing the patient-related DICOM tags or by converting it into a tag-free format such as PNG or NIFTI. However, as shown by Kumar et al.~\cite{Kumar2018} and further illustrated in this paper, the raw content of an image can be easily used to recover the identity of a person with up to $97\%$ of accuracy. 

A recent solution for decentralized training on multi-centric data is federated learning~\cite{McMahan2017}. 
The idea behind this strategy is to transfer the training gradients of the data instead of the data itself. While such approach is appealing to train a neural network with data hosted in different hospitals, it does not allow the use of a centralized cloud-based model for making predictions at test time without transmitting patient data.

Another solution for privacy protection is homomorphic-encryption (HE)~\cite{Dowlin16,Hesamifard17,nandakumar2019towards}. Although it ensures absolute data protection, one can also train a neural network with both encrypted and non-encrypted data.  Unfortunately, since the HE operations are limited to multiplication and addition, the non-linear operations of a CNN have to be approximated by polynomial functions which makes neural networks prohibitively slow. For example, \cite{nandakumar2019towards} reports computation times above 30 minutes to process a single 28\ttimes28 image using an optimized network with only 954 nodes. Thus, homomorphic neural networks so far proposed have been relatively simplistic~\cite{Hardy2017} and it is not clear how state-of-the-art medical image analysis CNNs like U-Net~\cite{DBLP:journals/corr/RonnebergerFB15} could be implemented in such framework. Furthermore, HE imposes important communication overhead~\cite{rouhani2018} and its use within a distributed learning framework is still cumbersome~\cite{Hardy2017}.

 \begin{figure*}[tp]
    \centering
    \setlength{\tabcolsep}{3pt}
    \begin{tabular}{c}
    \includegraphics[width=.87\textwidth]{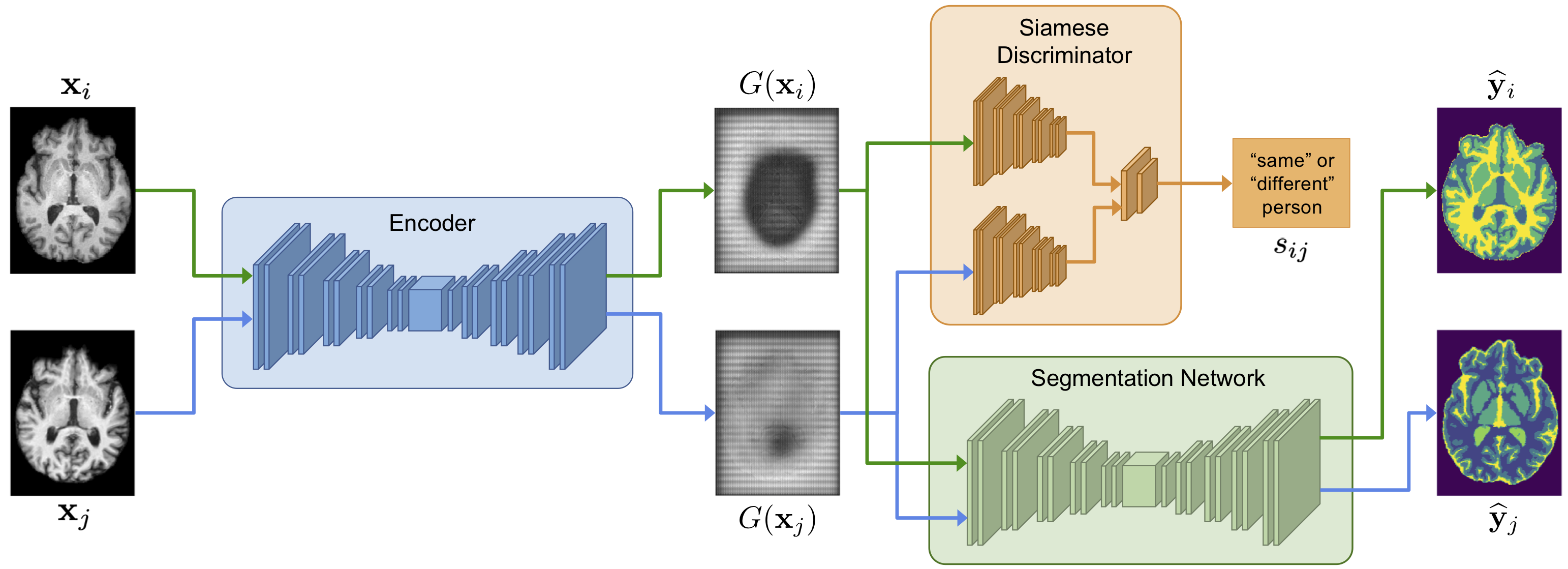} \\
    \update{a) Training configuration} \\[5mm]
    \includegraphics[width=.87\textwidth]{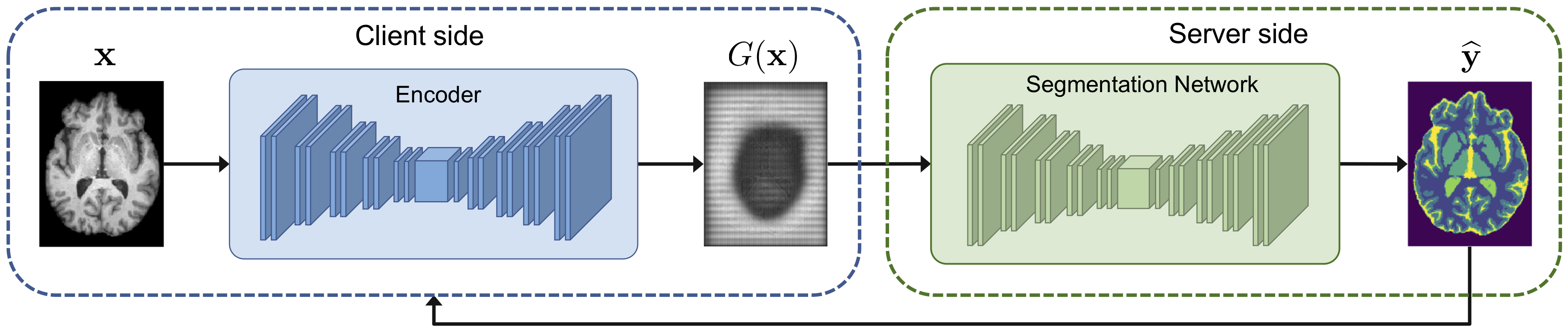} \\    
    \update{b) Test (deployed) configuration}
    \end{tabular}
    \vspace{.1cm}
    \caption{\update{(a)} Training configuration of our proposed system: 1) a client-side encoder network $G$ converts input images $\img_i$ and $\img_j$ into two feature maps $G(\img_i)$ and $G(\img_j)$, 2) the discriminator network $D$ tries to determine if its input data comes from the same patient ($s_{ij}=1$) or not ($s_{ij}=0$), and 3) a server-side segmentation network $S$ segments the encoded images. \update{(b)} At test time, the discriminator is removed from the system and images are processed one at a time by the encoder \update{on the client side} and the segmentation network \update{deployed on the server. The segmentation result is sent back to the client}.
    }
    \label{fig:system}
\end{figure*}

In this paper, we propose a client-server system which allows for the analysis of multi-centric medical images while preserving patient identity. A high-level view of the proposed system is given in Fig.~\ref{fig:system}. On the client side is an encoder that converts patient-specific data into an identity-obfuscated signal containing enough semantic information to analyse its content. The encoded data is then sent to the server where it is analyzed and the results of this analysis are sent back to the client. Since each hospital has the same encoder, the server can keep on updating its system without having access to patient-specific information.

We achieve this with an adversarial learning approach inspired by generative adversarial networks (GAN)~\cite{DBLP:journals/corr/LucCCV16,goodfellow2014generative,ganin2016domain} but with two main differences. As illustrated in Fig.~\ref{fig:system}, instead of being a two-network configuration, our system involves three networks: 1) an image encoder, 2) a discriminator and 3) a medical image analysis network (a segmentation CNN in our case). Whereas the encoder's objective is to obfuscate the content of a raw input image, the goal of the discriminator is to determine whether two encoded images come from the same patient or not. The third network is a CNN which analyzes the content of the encoded image. As such, while the encoder tries to fool the discriminator, it must preserve enough information to allow the third network to successfully analyze its content. At test time, the encoder network residing on the client side converts a raw image $\img$ into an encoded (and yet secure) feature map $G(\img)$. Thereafter, $G(\img)$ is transferred to the cloud-based server where the segmentation network is deployed. The resulting segmentation map $\seg$ is then sent back to the client. 

The major contributions of this work are as follows:
\begin{itemize}
\item We present the first client-server system for semantic medical image segmentation which allows for identity-preserving distributed learning. Obfuscating identity while preserving task-specific information is particularly challenging for segmentation, which requires to assign a label for each image pixel.
\item Our model proposes a novel architecture combining two CNNs, for the encoder and segmentation network, with a Siamese CNN for the discriminator. This Siamese discriminator learns identity-discriminative features from image pairs instead of a single image, allowing us to have a variable number of classes (i.e., subject IDs). Unlike the work in \cite{oleszkiewicz2018siamese}, our model is trained using both an adversarial Siamese loss and a task-specific loss, thereby providing encoded images that obfuscate identity while preserving the information required for the target task. 
\item \update{We provide a theoretical analysis showing that the proposed model minimizes the mutual information between pairs of encoded images and a variable indicating if these images are from the same subject. This analysis motivates our approach from a information theoretic perspective.}
\item We demonstrate 
that the privacy-preserving encoder learned with a given dataset can be used to encode images from another dataset, and that these encoded images are useful to update the segmentation network. 


\end{itemize}

\section{Related Works}

\paragraph{Privacy preserving in visual tasks} Traditional methods to preserve privacy rely on cryptographic approaches \cite{ziad2016cryptoimg,wang2017encrypted} which create local homomorphic encryptions of visual data. Although these methods perform well in some applications, homomorphic cryptosystems typically incur  high computational costs \cite{paillier1999public} and are mostly restricted to simple linear classifiers. This limits their usability in scenarios requiring more complex models like deep neural networks. Another solution consists in extracting feature descriptors from raw images, which are then transferred to the encrypted dataset server \cite{hsu2011homomorphic}. Nevertheless, sensitive information from original images can be still recovered from standard features, making these systems vulnerable to cyberattacks. An alternative strategy is to employ low-resolution images \cite{dai2015towards,chen2016estimating} or image filtering techniques \cite{butler2015privacy,jalal2012depth} to degrade sensitive information. However, since these approaches also reduce the quality of the visual content, they are limited to a reduced set of tasks such as action or face expression recognition. 
More recently, M{c}Clure et al. \cite{mcclure2018distributed} proposed using continual learning to circumvent the issue of privacy preservation in the context of multi-center brain tumor segmentation. Nevertheless, unlike our method, their approach is not directly optimized to obfuscate identity from visual data.

\paragraph{Federated learning} Federate learning has recently emerged as a solution to build machine learning models based on distributed data sets while preventing data leakage \cite{DBLP:journals/corr/XieBFGLN14,DBLP:journals/corr/KonecnyMRR16,DBLP:journals/corr/McMahanMRA16,DBLP:journals/corr/abs-1812-03288,yang2019federated}. With this approach, the learning process involves collaboration from all the data owners without exposing their data to others. This can typically be achieved by sharing the architecture and parameters between the client and server during training, along with intermediate representations of the model that may include the gradients, activations and weight updates. Thus, the client downloads the model from the server and updates the weights based on its local data. Yet, a drawback of these strategies is their huge requirements for network bandwidth, memory and computational power, which strongly limits their scalability. More importantly, federated learning does not prevent, at test time, from having to send private data from the client to the server in a scenario such as ours where the server holds the model and processes the data.  Also, while HE can be combined to federated learning, its communication protocol is cumbersome and imposes important communication overhead~\cite{Hardy2017,rouhani2018}.


\paragraph{Privacy preserving with adversarial learning} The recent success of adversarial learning has led to the increased adoption of this technique for the protection of sensitive information, particularly in visual data. Xu et \textit{al.} \cite{xu2019ganobfuscator} proposed to add carefully-designed noise to gradients during the learning procedure to train a differentially-private GAN in the context of image recognition. An unsupervised utility loss is employed for training in \cite{raval2017protecting}, based on the assumption that removing private characteristics from an image while minimizing changes to the rest of the image yields encoded representations that can be used to learn a target task. However, since the encoding is performed independently of the task, it is potentially sub-optimal for this task. Other works \cite{pittaluga2019learning,wu2018towards,yang2018learning,DBLP:journals/corr/abs-1904-05514} have leveraged adversarial training to jointly optimize privacy and utility objectives. In these works, the mapping functions for the adversarial and task-specific terms are standard classification models where the number of classes is fixed. In \cite{chen2018vgan}, a model which integrates a Variational Autoencoder (VAE) and a GAN is proposed to create an identity-invariant representation of face images. To explicitly control the features to be preserved, they include a discriminator which must predict the identity of the subject in a generated image. As the number of possible labels corresponds to the number of subjects to identify, this approach is not suitable for large-scale applications as the one considered in our work. 

To alleviate the problem of a non-fixed number of classes, \cite{oleszkiewicz2018siamese} uses a Siamese architecture for the discriminator which predicts whether two encoded images come from the same subject or not. \update{This paper focuses on biometrical data (e.g., fingerprint), which are dissimilar in nature from the medical images used in our method, and seeks a very different goal: finding the smallest possible transformation to an image which removes identity information and such that images can later be used by non-specific applications. In contrast, we obfuscate images with the strongest possible transformation so that subject identity cannot be recovered while at the same time the encoded image can be used to train an image analysis (i.e., segmentation) task. 
This translates into important methodological differences. First, while the model in \cite{oleszkiewicz2018siamese}  has a generator and a discriminator, our architecture is composed of three separate networks, i.e., an encoder, a Siamese discriminator and a segmentation network. Second, the final objective is different since we aim at maximizing the same-subject classification error, \emph{as well as} optimizing a task-specific loss related to segmentation. In summary, both the structure and objectives between \cite{oleszkiewicz2018siamese} and our work are different.

The work in \cite{wang2018cross} tackles a task opposite to privacy-preserving image analysis, where faces in input images are rejuvenated while preserving, and not removing as in our work, information related to kinship. This is done by minimizing a discriminative sparse metric learning loss encouraging generated images for members of the same family to be nearby in a low-dimensional subspace. \revision{In \cite{Xia_2020_CVPR}, Xia et \textit{al.} also employed adversarial training to develop a novel Generative cross-domain learning method via Structure-Preserving (GSP). The method attempts to transform target data into the source domain in order to take advantage of source supervision.}
}

\section{Methodology}
\label{sec:methods}

\subsection{Proposed system}

As shown in Fig.~\ref{fig:system}, our system implements a zero-sum game involving three separate CNN networks. At the input of our system is a raw image $\img \in \real^{H\times W\times D}$ (in our case a 3D T1 magnetic resonance image (MRI)). During training, images come in pairs $(\img_i, \img_j) \in \dataImg^2$, $i \neq j$. Each image pair is associated to the corresponding ground-truth segmentation maps $(\gt_i,\gt_j)$ and binary target $s_{ij}$ which equals $1$ when $\img_i$ and $\img_j$ come from the same patient and $0$ otherwise. As mentioned in Section~\ref{sec:datasets}, pairs of images from the same patient are not identical as they were acquired during different acquisition sessions, often months apart. 

The first network of our system is an encoder network $G$ parameterized by $\params_G$. The output of the encoder is a feature map $G(\img) \in \real^{H\times W\times D}$ which can be seen as an encoded version of the input image. 
While the encoder could return feature maps of any size, we chose maps with the same size as the input image $\img$ for the following important reasons. First, it allows preserving the information and spatial resolution of the input image. In contrast, using a compressed representation could lead to loss of details. This is why, for example, state-of-art segmentation networks employ skip connections that concatenate detailed features from downsampling layers with low-resolution features from upsampling layers \cite{dolz20183d,DBLP:journals/corr/RonnebergerFB15}. Second, despite the high spatial resolution of encoding $G(\img)$, it is still more compact than convolutional features of standard networks like VGG which have a lower spatial resolution but a larger number of channels (e.g., 14\ttimes14\ttimes512 = 100,352 features at the last convolutional layer of VGG compared to 224\ttimes224\ttimes1 = 50,176 features for our encoding, in the case of 224\ttimes224 images). Third, it enables a fair comparison of segmentation performance with the model using non-encoded images. \update{Last, preserving the same shape as the input image allows processing the encoding image in sub-regions (i.e., 3D patches), which provides additional protection when these sub-regions are sent in a random order to the server for segmentation.} While training the system, the encoder is fed with a pair of images ($\img_i,\img_j)$ and returns two encoded images $G(\img_i)$ and $G(\img_j)$. Here, $\img_i$ and $\img_j$ are processed individually and not concatenated together. 

The second network is the Siamese discriminator network $D$ with parameters $\params_D$, which is fed with a pair of encoded images. The goal of this network is to determine whether the two images come from the same patient or not. By fooling $D$ (i.e., maximizing its loss), the encoder \update{transforms the images and makes it difficult to identify the patient.} 
Last, the third CNN is the segmentation network $S$ with parameters $\params_S$, whose goal is to recover the correct segmentation map $\gt$ given the encoded image $G(\img)$. During training, both $G(\img_i)$ and $G(\img_j)$ are segmented. For this network, we used the widely-adopted U-Net \cite{DBLP:journals/corr/RonnebergerFB15}, which is very effective at segmenting medical images. 

\subsection{Training losses}

As in most adversarial models, our system is trained with two losses that steer the model in opposite directions. In our case, the training procedure involves a segmentation loss and an adversarial discriminator loss:
\begin{align}
    & \!\!\!\min_{\params_G, \,\params_S} \max_{\params_D} \ \loss(\params_G, \params_S, \params_D) 
        \, = \, \expect_{\img,\gt \sim P(\img, \gt)} \big[ \lossSeg\big(S(G(\img)),\gt\big) \big] \nonumber\\[-1mm]
      & \qquad -\lambda \, \expect_{\img_i, \img_j \sim P(\img)} \Big[\lossDis\big(D(G(\img_i), G(\img_j)), s_{ij}\big) \Big]
     \label{eq:total_loss}
\end{align}
where $\lossSeg$ is attached to the segmentation network, $\lossDis$ is attached to the discriminator, and $s_{ij}\!=\!\one_{\sid(\img_i) = \sid(\img_j)}$ is a binary indicator function indicating whether two encoded images come from the same patient or not.

Using $\seg = S(G(\img))$ as shorthand notation for the predicted segmentation map, we employ the generalized Dice loss \cite{DBLP:journals/corr/SudreLVOC17} to train the segmentation network, i.e.
\begin{equation}
    \lossSeg(\seg, \gt) \ = \ 1 \, - \, \frac{2 \sum_{p} \gtPix_p\,\segPix_p}{\sum_{p} \gtPix_p \, + \, \sum_{p}\segPix_p}.
    \label{eq:seg_loss}
\end{equation}
For the adversarial loss, we want the discriminator to differentiate subject identity in pairs of encoded images $G(\img_i)$, $G(\img_j)$. Here, we define discriminator's classification loss $\lossDis$ using binary cross entropy:
\begin{equation}
    \lossDis(\pred,\same) \ = \ -\same\log\pred \ - \ (1-\same)\log(1-\pred).
    \label{eq:xentropy_loss}
\end{equation}

Like most adversarial models, the parameters of our system cannot be updated all at once through a gradient step. Instead, we first update the encoder and segmentation parameters $\params_D,\params_G$ by taking the following gradient descent step: 
\begin{align}\label{eq:update_seg_enc}
(\params_S^{t+1},\params_G^{t+1}) \ 
    \leftarrow \ (\params_S^t, \params_G^t) \, - \, \eta \nabla\lossEst(\params_G^t,\params_S^t).
\end{align}
The gradient is estimated using random batches of image pairs $\batch \subset |\dataImg|\!\times\!|\dataImg|$, as follows:
\begin{align}\label{eq:grad_seg_enc}
&\nabla\lossEst(\params_G,\params_S) \, = \, \frac{1}{|\batch|} \sum_{(i,j) \in \batch} \!\!\!\nabla_{\params_G,\params_S} \Big[\lossSeg\big(\seg_i, \gt_i\big)\nonumber\\[-0.75mm] 
& \ + \ \lossSeg\big(\seg_j, \gt_j\big) \ - \ \lambda\lossDis\big(D(G(\img_i), G(\img_j)), s_{ij}\big)\Big]
\end{align}
We then update the discriminator parameters by taking a gradient ascent step 
\begin{equation}\label{eq:update_discr}
    \params_D^{t+1} \ \leftarrow \ \params_D^t \, + \, \eta \nabla\lossEst(\params_D^t)
\end{equation}
with the batch gradient computed as
\begin{align}\label{eq:grad_discr}
& \nabla\lossEst(\params_D) \, = \, -\frac{\lambda}{|\batch|} \sum_{(i,j) \in \batch} \!\!\! \nabla_{\params_D} \lossDis\big(D(G(\img_i), G(\img_j)), s_{ij}\big).
\end{align}
Details of our training method are provided in Algo. \ref{algo}.

\SetAlFnt{\small\sffamily}
\begin{algorithm2e}[t!]
\SetNoFillComment
%
\KwIn{Images $\dataImg$ and ground-truth masks $\dataGT$} 
\KwOut{Network parameters $\params_G, \params_D, \params_S$}

\BlankLine
\tcc{Initialization}
Initialize network parameters $\params_G, \params_D, \params_S$\;

\BlankLine
\tcc{Main loop}

\For{$\mr{epoch} = 1, \ldots, E_{\mr{max}}$}{

    \For{$\mr{iter} = 1, \ldots, T_{\mr{max}}$}{
        \BlankLine
        Randomly select batch $\batch \subset |\dataImg|\!\times\!|\dataImg|$\;        

        \BlankLine
        Update encoder and segmentation network parameters ($\params_S,\params_G$) using Eq. (\ref{eq:update_seg_enc}) and (\ref{eq:grad_seg_enc})\;  
    
        \BlankLine
        Update discriminator parameters ($\params_D$) using Eq. (\ref{eq:update_discr}) and (\ref{eq:grad_discr})\;
    }

}

\Return{$\params_G, \params_D, \params_S$}\;
\caption{Privacy-preserving network learning}\label{algo}

\end{algorithm2e}

\update{
\subsection{Link to mutual information minimization}
\label{sec:motivation}

The idea of using adversarial learning to obfuscate identity is well-grounded on the principles of information theory. Hence, it can be shown that training a subject-ID classifier as discriminator in an adversarial learning model implicitly minimizes the mutual information between the encoded image  and the corresponding subject ID. However, as mentioned before, this strategy is ill-suited to our problem since the number of classes (i.e. the number of subject ID) is not fixed and instead increases as new subjects are added to the system. This poses two major problems: 1) the output size of $D$ varies over time, and 2) the classification task is hard to learn due to the large number of classes compared to the very low number of samples per classes (i.e., 1--4 images per subject). This motivates our approach, based on a Siamese discriminator, where the identification task is to determine if two encoded images are from the same subject. This approach can naturally incorporate new subjects/classes over time and is easier to learn since it corresponds to a binary classification problem and training samples are more abundant (i.e., image pairs instead of images).

As a theoretical contribution of this work, we show that our proposed privacy-preserving learning approach based on a Siamese discriminator also relates to mutual information minimization. This is done in the following theorem.

\begin{thm}
Let $\img$, $\img'$ be two images, and $G(\img)$, $G(\img')$ be their encoded version obtained by the generator $G$. Denoting as $z\!=\!\sid(\img)$ the subject ID of image $\img$ and $s = \vone_{z=z'}$ the random variable indicating whether images $\img$ and $\img'$ are from the same subject, optimizing the problem defined in Eq. (\ref{eq:total_loss}) corresponds to minimizing mutual information $I(G(\img), G(\img'); \, s)$ between encoded images $G(\img)$, $G(\img')$ and random variable $s$.
\end{thm}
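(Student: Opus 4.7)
The plan is to adapt the classical GAN optimality argument of Goodfellow et al.~\cite{goodfellow2014generative} to our Siamese binary classification setting. Concretely, I would (i) for fixed encoder $G$, compute the optimal discriminator $D^{\star}$ in closed form, (ii) substitute it back into the minimax value to eliminate $\params_D$, and (iii) recognize the resulting quantity as a conditional entropy that, via the identity $I(A;B) = H(B) - H(B\mid A)$, produces the claimed mutual information.

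First I would invoke the calibration property of binary cross-entropy: for any $p \in [0,1]$, the map $q \mapsto -p\log q - (1-p)\log(1-q)$ is minimized at $q = p$. Applied pointwise inside the expectation in (\ref{eq:total_loss}), this yields
\begin{align*}
D^{\star}(\enc, \enc') \ = \ P\big(s = 1 \,\big|\, G(\img) = \enc,\, G(\img') = \enc'\big).
\end{align*}
Substituting $D^{\star}$ back, the resulting expected BCE evaluates by definition to the binary conditional entropy of $s$ given the encoded pair, so that
\begin{align*}
\min_{\params_D}\, \expect\big[\lossDis\big(D(G(\img), G(\img')), s\big)\big] \ = \ H\big(s \,\big|\, G(\img), G(\img')\big).
\end{align*}

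Next, since $\max_{\params_D}\{-\lambda\,\expect[\lossDis]\} = -\lambda \min_{\params_D}\expect[\lossDis]$, the whole problem collapses to
\begin{align*}
\min_{\params_G,\params_S} \Big\{ \expect\big[\lossSeg(S(G(\img)),\gt)\big] \ - \ \lambda\, H\big(s \mid G(\img), G(\img')\big)\Big\}.
\end{align*}
Using $I(G(\img), G(\img'); s) = H(s) - H(s \mid G(\img), G(\img'))$ and observing that the marginal entropy $H(s)$ depends only on the pair-sampling distribution and not on $\params_G$, maximizing the conditional entropy with respect to $G$ is equivalent to minimizing the mutual information $I(G(\img), G(\img'); s)$. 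Thus, up to an additive constant, the adversarial term in (\ref{eq:total_loss}) equals $\lambda\, I(G(\img), G(\img'); s)$, which is precisely the claim.

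The main obstacle I expect is the standard ``infinite capacity'' idealization: the closed-form $D^{\star}$ must be realizable inside the chosen discriminator family for $\min_{\params_D} \expect[\lossDis]$ to coincide with $H(s \mid G(\img), G(\img'))$, so I would state this non-parametric hypothesis explicitly, as is customary in GAN analyses. A secondary subtlety to address is fixing the joint law of $(\img, \img', s)$: writing $\img, \img'$ as independent draws from $P(\img)$ and setting $s = \vone_{\sid(\img) = \sid(\img')}$ makes $s$ a genuine binary random variable with an encoder-independent marginal, which is what lets $H(s)$ be discarded as a constant in the final step.
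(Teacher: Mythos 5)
Your proof is correct, but it takes a genuinely different route from the paper's. You follow the Goodfellow-style two-step argument: solve the inner maximization over $\params_D$ in closed form via the calibration property of binary cross-entropy, identify $\min_{\params_D}\expect[\lossDis]$ with the conditional entropy $H(s \mid G(\img), G(\img'))$, and then observe that the outer minimization over $\params_G$ maximizes that conditional entropy, hence (since $H(s)$ is encoder-independent) minimizes $I(G(\img),G(\img');s)$ exactly. The paper instead follows the InfoGAN-style variational argument: it writes $I = H(s) - H(s\mid G(\img),G(\img'))$, introduces a variational distribution $Q$ (realized by the Siamese discriminator), uses non-negativity of the KL divergence to obtain the lower bound $I \geq H(s) + \expect[\log Q(s\mid G(\img),G(\img'))]$ (invoking a lemma from Chen et al.\ to replace the resampled $s'$ by $s$), and identifies $\expect[\log Q]$ with $-\expect[\lossDis]$, concluding that the minimax optimization minimizes a maximally-tight bound on $I$. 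The two arguments coincide in the nonparametric limit --- your optimal $D^{\star}$ is exactly the point where the paper's bound becomes tight --- but they emphasize different things: your version gives an exact identity for the adversarial term (at the cost of the explicitly stated infinite-capacity hypothesis on $D$, which you rightly flag), while the paper's version yields a bound valid for \emph{any} discriminator in the parametric family, at the cost of the weaker logical conclusion that the encoder minimizes a lower bound whose tightness depends on how well $Q$ approximates the true posterior. Both require essentially the same idealization to close the argument; your observation that $H(s)$ is fixed by the pair-sampling distribution and independent of $\params_G$ is the same constant-discarding step the paper uses.
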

\begin{proof}
\revision{
We proceed by defining mutual information and then bounding it using a variational approach. The mutual information $I(G(\img), G(\img'); \, s)$ between encoded images $G(\img)$, $G(\img')$ and random variable $s$ can be defined as
\begin{align}
 & I\big(G(\img), G(\img'); s\big) \ = \
    H\big(s\big) \, - \, H\big(s \, | \, G(\img), G(\img')\big)\\
    & \ = \ H\big(s\big) \, + \,
    \expect_{\img, \,\img' \sim P(\img, \img')} \Big[\expect_{s' \sim P(s \, | \, G(\img), G(\img'))}\big[\nonumber\\
    & \qquad\qquad\qquad\log P\big(s' \, | \, G(\img), G(\img')\big)\big]\Big]\\
    & \ = \ H\big(s\big) \, + \,
    \expect_{s \sim P(s), \,\img, \,\img' \sim P(\img, \img' | s)} \Big[\expect_{s' \sim P(s \, | \, G(\img), G(\img'))}\big[\nonumber\\
    & \qquad\qquad\qquad\log P\big(s' \, | \, G(\img), G(\img')\big)\big]\Big]
    \label{eq:proof_1}
\end{align}
with $H(x)$ being the Shannon entropy of a random variable $x$ and using the fact that $G$ is a deterministic function. To deal with the intractable computation of $P(s \, | \, G(\img), G(\img'))$, we derive a lower bound using variational distribution $Q(s \, | \, G(\img), G(\img'))$:
\begin{align}
 (\ref{eq:proof_1}) \ = \ & H\big(s\big) \, + \, \expect_{s \sim P(s), \,\img, \,\img' \sim P(\img, \img' | s)}\Big[\expect_{s' \sim P(s | G(\img), G(\img'))}\big[\nonumber\\
    & \underbrace{D_{\mr{KL}}\big(P\big(s' \, | \, G(\img), G(\img')\big) \, \big\| \, Q\big(s' \, | \, G(\img), G(\img')\big)\big)}_{\geq 0} \nonumber\\
    & \quad + \,  \log Q\big(s' \, | \, G(\img), G(\img')\big)\big] \Big]
    \label{eq:proof_3}
\end{align}
\begin{align}
    \ \geq \ & H\big(s\big) \, + \, 
        \expect_{s \sim P(s), \,\img, \,\img' \sim P(\img, \img' | s)}\Big[\expect_{s' \sim P(s | G(\img), G(\img'))}\big[ \nonumber\\
        & \qquad\qquad\log Q\big(s' \, | \, G(\img), G(\img')\big)\big]\Big]
        \label{eq:proof_2}
\end{align}
Next, we use the fact that, for random variables $X$, $Y$ and function $f(x,y)$, $\expect_{x \sim X, \,y \sim Y|x}\big[f(x,y)\big] = \expect_{x \sim X, \,y \sim Y|x, \, x' \sim X|y}\big[f(x',y)\big]$ (see Appendix A.1 of \cite{DBLP:journals/corr/ChenDHSSA16} for proof) to get:
\begin{align}
 (\ref{eq:proof_2}) \ = \ & H\big(s\big) \, + \, 
    \expect_{s \sim P(s), \, \img, \,\img' \sim P(\img, \img' | s)}\big[
    \nonumber\\
        & \qquad \log Q(s \, | \, G(\img), G(\img'))\big]\\
    \ = \ & H\big(s\big) \, + \, 
    \expect_{s \sim P(s), \, z, z' \sim P(z, z' | s), \,
    \img \sim P(\img | z), \, \img' \sim P(\img' | z')}\big[
    \nonumber\\
        & \qquad \log Q(s \, | \, G(\img), G(\img'))\big]\label{eq:lower_bound}
\end{align}
Last, we equate (\ref{eq:lower_bound}) with Eq. (\ref{eq:total_loss}) using the following: 1) $H(s)$ can be treated as a constant, and 2) the variational distribution $Q$ is modeled using our Siamese discriminator $D$, and 3) $\log Q(s \, | \, G(\img), G(\img'))$ is equal to minus the cross-entropy loss $\ell_D$ of Eq. (\ref{eq:xentropy_loss}). Maximizing the lower bound in (\ref{eq:lower_bound}) thus increases its tightness to $I\big(G(\img), G(\img'); \, s\big)$, the two becoming equal when $P(\cdot \, | \, G(\img), G(\img'))=Q(\cdot \, | \, G(\img), G(\img'))$. Consequently, optimizing the loss function of (\ref{eq:total_loss}) minimizes a maximally-tight bound to mutual information.
}
\end{proof}

By minimizing mutual information, which is a symmetric measure of co-dependence between two variables, we ensure that subject identity cannot be established by matching an encoded image with those previously seen in the system. Moreover, a powerful property of mutual information is that it is invariant to any monotone and uniquely invertible transformation of the variables \cite{kraskov2004estimating}. Consequently, it provides a certain robustness to small transformations applied to the (encoded) images, such as translation and rotation. Similarly, it avoids the trivial and non-obfuscating solution where the discriminator is forced to systematically flip its predictions to $s_{\mr{flip}} = 1-s$, since this does not change the mutual information, i.e. $I(G(\img), G(\img'); \, s) = I(G(\img), G(\img'); \, s_{\mr{flip}})$.
}

\subsection{Implementation details}
\label{sec:implementation}

In this study, we used a U-Net architecture ~\cite{DBLP:journals/corr/RonnebergerFB15} but with 3D convolution kernels both for the encoder and the segmentation network. The discriminator is a Siamese network as in \cite{Koch2015SiameseNN}. We used a DenseNet architecture \cite{DBLP:journals/corr/HuangLW16a} with 3D convolution kernels for the CNN backbone. The CNN Siamese backbone (i.e. the left-most CNN inside the discriminator box in Fig.~\ref{fig:system}) is used to extract the features of input images. The last layer of the discriminator contains two fully-connected layers to predict if two encoded images are from the same patient. 

The system was implemented with Pytorch. We used the Adam optimizer with a learning rate of $10^{-4}$ for the whole training process. The PC used for training is an Intel(R) Core(TM) i7-6700K 4.0GHz CPU, equipped with a NVIDIA GeForce GTX 1080Ti GPU with 12 GB of memory. Training our framework takes roughly 30 minutes per epoch, and around 2 days for the fully-trained system. 
 
Since our networks employ 3D convolutions, and due to the large size of MRI volumes, dense training cannot be applied to the whole volume. Instead, volumes are split into smaller patches of size 64\ttimes64\ttimes64, which allows dense training in our hardware setting. \update{During training, the patches are randomly cropped from the MRI volume. In testing, the volume to segment is instead divided in evenly-spaced 3D patches, which are then segmented separately. Individual patch outputs are then combined to obtain full-size segmentation maps. The process of cropping patches and recombining outputs into full-size segmentation maps is detailed in Appendix \ref{sec:appendix2}. 

An important advantage of segmenting patches separately is that they can be sent in a random order to the server once the image has been encoded on the client side. This makes obtaining the identity of a subject even more challenging, since a potential attacker must either reorder patches to recover the full-size encoded image and segmentation map, or match small-size patches with previously seen ones. In Section \ref{sec:retrieval-seg}, we illustrate this advantage in our experiments by performing a subject-ID retrieval analysis on output patches instead of full-size segmentation maps.    
} 

To help the learning process in early training stages, the encoder is pre-trained using an auto-encoder loss. Hence, when the real training starts, the encoder generates encoded images which are almost identical to input ones. Likewise, both the segmentation and discriminator networks were pre-trained on the original images from the the PPMI dataset.

\section{Experimental results}
\label{sec:experiments}

\subsection{Datasets}\label{sec:datasets} 

\paragraph{PPMI} We experiment on brain tissue segmentation of 5 classes: white matter (WM), gray matter (GM), nuclei, internal cerebrospinal fluid (CSF int.) and external cerebrospinal fluid (CSF ext.). We used the T1 images of the publicly-available Parkinson's Progression Marker Initiative (PPMI) dataset~\cite{marek2011ppmi}. We took images from 350 subjects, most of which with a recently diagnosed Parkinson disease. 
Each subject underwent one or two baseline acquisitions and one or two acquisitions 12 months later for a total of 773 images. PPMI MR images were acquired on Siemens Tim Trio and Siemens Verio 3 Tesla machines from 32 different sites. The images have been registered onto a common MNI space and resized to 144\ttimes192\ttimes160 with a 1\,mm$^3$ resolution. More information on the MRI acquisition and processing can be found online: \url{www.ppmi-info.org}.

The dataset was divided into a training and a testing set as shown in Table~\ref{table:PPMI_split}. \update{We split the data in a stratified manner so that images from the same subject are not included in both the training and testing sets.} In order to keep a good balance between the pairs of images, during training and testing, we randomly sampled an equal number of negative and positive samples. Due to the burden of manually annotating volumetric images, we resort to Freesurfer to obtain the segmentation ground-truth, similar to recently-published approaches on large-scale datasets \cite{dolz20183d,roy2019quicknat}.
\update{We use cross-validation to measure the performance of our approach and properly set the hyper-parameters. The training set was randomly divided into 5 stratified subsets, each containing around 54 subjects. We then trained our system for 5 rounds, each time using a different group of 4 subsets for training and the remaining one for validation. 
After validation, we retrained the system on the entire training set and reported results from the independent test set. 
}

\begin{table}[tp]
    \caption{PPMI data used for training and testing our method.
    }
    \label{table:PPMI_split}
    \centering\small
    \begin{tabular}{l|cc|c}
        \toprule
         & Training & Testing & Total\\
        \midrule
        \revision{Num.} of subjects & 269 & 81 & 350 \\ 
        \revision{Num.} of images & 592 & 181 & 773 \\ 
        \revision{Num.} of positive pairs & 509 & 148 & 657\\
        \bottomrule
    \end{tabular}
\end{table}

\begin{figure*}[tp]
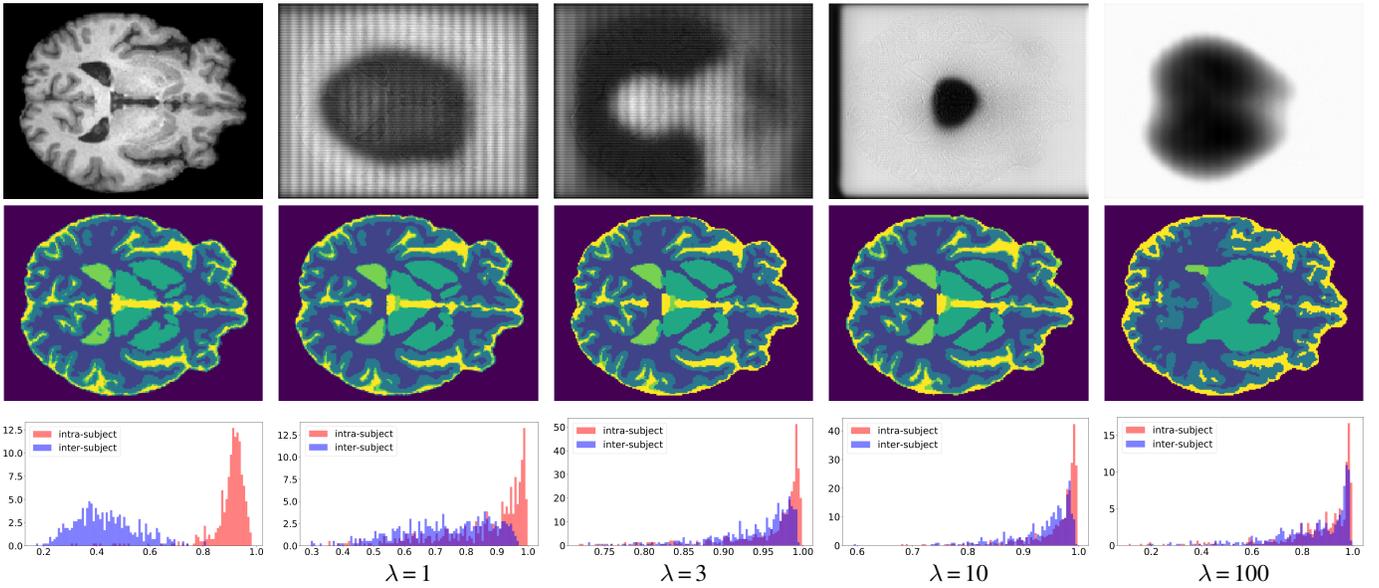

    \centering
    \begin{small}
    \setlength{\tabcolsep}{3pt}
    \begin{tabular}{ccccc}
        \ppmi{img.png} &
        \ppmi{lambda_1_enc.png} &
        \ppmi{lambda_3_enc.png} &
        \ppmi{lambda_10_enc.png} &
        \ppmi{lambda_100_enc.png} \\
        \ppmi{gt.png} &
        \ppmi{lambda_1_seg.png} &
        \ppmi{lambda_3_seg.png} &
        \ppmi{lambda_10_seg.png} &
        \ppmi{lambda_100_seg.png} \\[1.25mm]
        \ppmi{base_ssim.pdf} &
        \ppmi{lambda_1_ssim.pdf} &
        \ppmi{lambda_3_ssim.pdf} &
        \ppmi{lambda_10_ssim.pdf} &
        \ppmi{lambda_100_ssim.pdf} \\
        & $\lambda$\,=\,1 & $\lambda$\,=\,3 & $\lambda$\,=\,10 & $\lambda$\,=\,100
    \end{tabular}
    \vspace{.1cm}
    \end{small}
    \caption{Impact of discriminator loss $(\lambda)$. [\textbf{First column}] (Top row): input MRI image $\img$, (Second row): ground truth segmentation map $\gt$, (Third row): distribution of inter-\update{subject} and intra-subject MS-SSIM score on the PPMI dataset. [\textbf{Remaining columns}], (Top row): encoded image $G(\img)$, (Second row): predicted segmentation $\seg$ and (Third row): distribution of MS-SSIM values between encoded images $G(\img_i)$ and $G(\img_j)$. 
    }
    \label{fig:impact-lambda}
\end{figure*}

\paragraph{MRBrainS} To further validate the proposed method and investigate its generalization ability, we also tested it on segmenting MRI scans from the MRBrainS 2013 challenge dataset \cite{mendrik2015mrbrains}. These images were acquired on a 3.0T Philips Achieva MR scanner 
and come with expert-annotated segmentation masks including three classes: WM, GM and CSF. We employed a single modality (i.e., MR-T1) in our experiments. Bias correction was performed as a pre-processing step. Original images had a resolution of 0.96\ttimes0.96\ttimes3\,mm$^3$ and were registered onto the MNI space using ANTs \cite{Avants2011}.

\subsection{Evaluation metrics}

To gauge the performance of our system, we use the classification accuracy for measuring the discriminator's ability to identify images from the same person, and employ the Dice score for evaluating segmentation results. We also use the multiscale structural-similarity (MS-SSIM) score to measure image-to-image distance as a proxy of perceived image quality \cite{Wang2003}. 

Last, to determine if an encoded image can be used to recover the subject in a top-$k$ retrieval setting \cite{Kumar209726}, we use mean average precision (mAP). \revision{
Given an image $i$, we rank other images in the dataset by their similarity to image $i$. The similarity between two images is the cosine similarity between the feature vectors of each image extracted by the CNN backbone of the Siamese discriminator. Let $T_i$ be the set of images of the same subject as image $i$, and denote as $S^k_i$ the set containing the $k$ images most similar to $i$ (i.e., the $k$ nearest neighbors of $i$). For a given value of $k$, we evaluate the retrieval performance using the measure of top-$k$ precision (also known as precision-at-$k$):
\begin{align}
        \big(\mr{precision}@k\big)_i = \frac{T_i \cap S^k_i}{k}
\end{align}
}

Considering each encoded test image $G(\img_i)$ as a separate retrieval task where one must find other encoded images from the same person, the average precision (AP) for $G(\img_i)$ is given by
\begin{equation}
    \mr{AP}_i \ = \ \frac{1}{\sum_{j\neq i} s_{ij}} \sum_{k=1}^{|\dataImg|} \big(\mr{precision}@k\big)_i \cdot s_{ik},
\end{equation}
where $\mr{precision}@k$ is the precision at cut-off $k$, i.e. the ratio of $k$ encoded images most similar to $G(\img_i)$ which belong to the same person. mAP is then the mean of AP values computed over all test examples.

\subsection{Results}
\label{sec:results}

\subsubsection{Results on non-encoded images}

We first processed the dataset without the adversarial component, i.e, by independently training the segmentation and the discriminator networks without the encoder. We call this setting {\em non-encoded} in our results. In the first row of Table~\ref{table:ClassificationResults}, we see that the discriminator obtains a testing accuracy of 95.3\%. This underlines how easy it is for a neural network to recognize a patient based on the content of a brain MRI. More surprising is the 97\% classification accuracy that we obtain by simply thresholding the image-to-image MS-SSIM score. This can be explained by the inter-\update{subject} and intra-subject MS-SSIM distribution plots shown in the third row of the first column of Fig.~\ref{fig:impact-lambda}. As can be seen, when considering non-encoded images, the intra-subject MS-SSIM scores (red curve) are significantly larger than that of the inter-subjects (blue curve). This again illustrates the ease of recognizing the identity of a person based on the content of a medical image. 

The PPMI segmentation Dice scores on non-encoded images for the five brain regions are in the first row of Table~\ref{table:SegmentationResults}. We also report the overall Dice computed as the mean of Dice scores in all regions, weighted by the regions' size. These results correspond roughly to those obtained in recent publications for the same architecture~\cite{dolz2019}. Note that the nuclei and the internal CSF have a lower Dice due to the smaller size of these regions.

\begin{table}[tp]
    \centering
    \caption{Intra\update{-subject} and inter-subject prediction accuracy on test examples obtained by thresholding MS-SSIM scores, using the adversarial discriminator ($D_{\mr{adv}}$), or training a separate discriminator on the encoded image ($D_{\mr{new}}$). The mAP column is the mean average precision of a top-$k$ retrieval analysis using the Siamese discriminator's embedding as representation. Results are reported for non-encoded images or encoded images for different $\lambda$ values.}
    \label{table:ClassificationResults}    
    \small
    \resizebox{0.47\textwidth}{!}{
    \begin{tabular}{ll|ccc|c}
        \toprule
        & & \multicolumn{3}{c|}{Accuracy} & \multirow[b]{2}{*}{mAP} \\
        \cmidrule(l{6pt}r{6pt}){3-5}
        & & MS-SSIM & $D_{\mr{adv}}$ & $D_{\mr{new}}$ &  \\
        \midrule
        \multicolumn{2}{l|}{Non-encoded} & 0.970 & ~~--- & 0.953   & 0.850 \\
        \midrule        
        \multirow{4}{*}{Encoded}
        & $\lambda$\,=\,1 & 0.564 & 0.520 & 0.598 & 0.189  \\
        & $\lambda$\,=\,3 & 0.533 & 0.537 & 0.615 & 0.152 \\
        & $\lambda$\,=\,10 & 0.510 & 0.523 & 0.577 & 0.141  \\
        & $\lambda$\,=\,100 & 0.503 & 0.516 & 0.513 & 0.087 \\
        \bottomrule
    \end{tabular}
    }
\end{table}

\begin{table}[h]
\caption{Segmentation Dice score on the PPMI test set for different values of $\lambda$. Non-enc refers to the model trained with non-encoded images. 
}
\label{table:SegmentationResults}
\centering
\small\setlength{\tabcolsep}{4pt}
\begin{tabular}{ l|ccccc|c}
    \toprule
    &  GM & WM & Nuclei & CSF int. & CSF ext. & Overall\\
    \midrule
    Non-enc & 0.941 & 0.853 & 0.657 & 0.665 & 0.825& 0.848 \\
    $\lambda$\,=\,1 & 0.925 & 0.824 & 0.580 & 0.598 & 0.752 & 0.812 \\
    $\lambda$\,=\,3 & 0.899 & 0.793 & 0.549 & 0.550 & 0.693 & 0.778 \\
    $\lambda$\,=\,10 & 0.881 & 0.796 & 0.555 & 0.531 & 0.685 & 0.771 \\
    $\lambda$\,=\,100 & 0.847 & 0.692 & 0.454 & 0.405 & 0.513 & 0.684 \\
    \bottomrule
\end{tabular}
\end{table}

\begin{figure*}[tp]
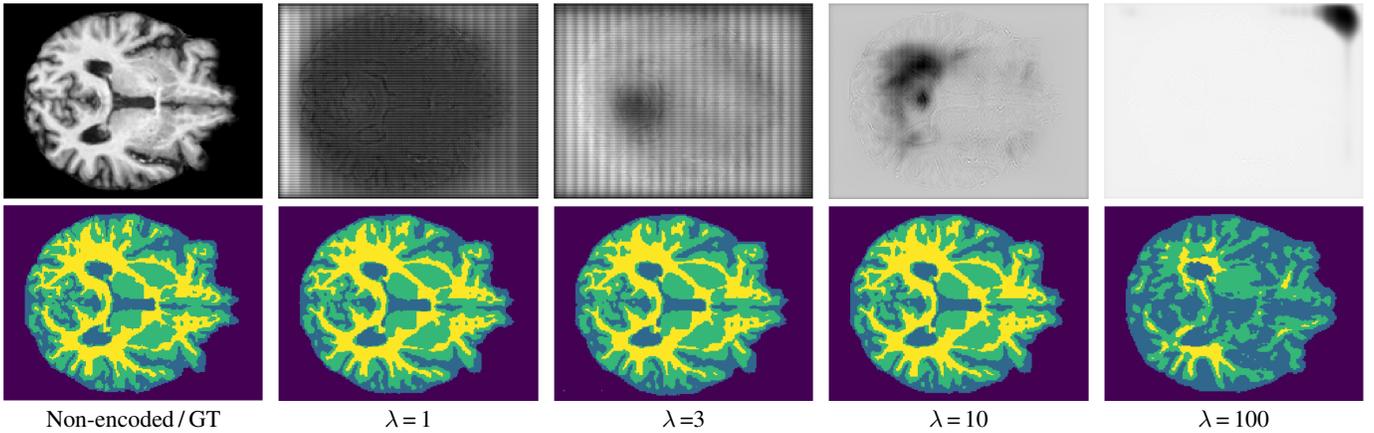

    \centering
    \begin{small}
    \setlength{\tabcolsep}{3pt}
    \begin{tabular}{ccccc}
        \mrbrains{input.png} &
        \mrbrains{mrb_lambda_1_enc.png} &
        \mrbrains{mrb_lambda_3_enc.png} &
        \mrbrains{mrb_lambda_10_enc.png} &
        \mrbrains{mrb_lambda_100_enc.png} \\
        \mrbrains{gt.png} &
        \mrbrainsretrain{mrb_retrain_lambda_1.png} &
        \mrbrainsretrain{mrb_retrain_lambda_3.png} &
        \mrbrainsretrain{mrb_retrain_lambda_10.png} &
        \mrbrainsretrain{mrb_retrain_lambda_100.png} \\
        Non-encoded\,/\,GT & $\lambda$\,=\,1 & $\lambda$\,=3 & $\lambda$\,=\,10 & $\lambda$\,=\,100
    \end{tabular}    
    \end{small}
    \caption{Test results on the MRBrainS dataset. [\textbf{First column}] (Top row): input MRI image $\img$, (Bottom row): ground truth segmentation map $\gt$. [\textbf{Remaining columns}], (Top row): encoded image $G(\img)$, (Bottom row): predicted segmentation $\seg$ with re-trained segmentation networks on MRBrainS}
    \label{fig:mrbrains}
\end{figure*}

\subsubsection{Adversarial results}
\label{sec:adversarial-res}

We next report results of our adversarial approach obtained with different values of parameter $\lambda$, which controls the trade-off between segmentation accuracy and identity obfuscation. The first row of Fig.~\ref{fig:impact-lambda} shows encoded images $G(\img)$ with the corresponding raw input MRI $\img$. As can be seen, the larger the $\lambda$ value is, the more distorted the encoded image gets. Nonetheless, except for extreme cases (e.g., $\lambda$\,=\,100) the encoded images contain enough information for the segmentation network to recover a good segmentation map (c.f., the second row of Fig.~\ref{fig:impact-lambda}). The obfuscating power of our method is also illustrated by the MS-SSIM plots (c.f., third row of Fig.~\ref{fig:impact-lambda}). As $\lambda$ increases, the distribution of inter-subject MS-SSIM between encoded images $G(\img_i)$ and $G(\img_j)$ becomes more and more similar to that of intra-subjects.  

The encoder's ability to obfuscate identity is evaluated quantitatively in Table~\ref{table:ClassificationResults}. Four different techniques are used to measure this property. First, based on the observation that the distribution of MS-SSIM values differs between images from the same patient and images from different patients (c.f., last row of Fig.~\ref{fig:impact-lambda}), we compute the accuracy obtained by the best possible tresholding of MS-SSIM values (i.e., values below or equal to the threshold are predicted as same-patient images, and those above as different-patient images). Second, we report the classification accuracy of the discriminator used for training the encoder, denoted as $D_{\mr{adv}}$ in Table~\ref{table:ClassificationResults}. Third, since the encoder was trained to fool $D_{\mr{adv}}$, we also trained a new ResNet discriminator ($D_{\mr{new}}$) as in \cite{DBLP:journals/corr/HeZRS15} on the fixed encoded images to measure how good the encoder is with respect to an independent network that was not involved in training our system. Last, to assess whether an encoded image can be used to find the corresponding subject with a retrieval approach, we considered the embedding of Siamese discriminator $D_{\mr{new}}$ as representation of each encoded test image and used Euclidean distance to find most similar encoded images. We employ mAP to measure retrieval performance.

Results in Table~\ref{table:ClassificationResults} show the same trend for all four obfuscation measures. When images are not encoded, identifying the subject's identity either by comparing two images or using a retrieval-based approach is fairly easy. However, this becomes much harder for encoded images, with accuracy and mAP rates dropping as $\lambda$ increases. Moreover, as shown in column $D_{\mr{new}}$, employing a discriminator trained independently from the encoder does not help re-identify the subject's ID. This demonstrates the robustness of our method to classification approaches.

\update{ 
The segmentation performance obtained with different privacy-segmentation trade-off, defined by the $\lambda$ parameter, is given in Table~\ref{table:SegmentationResults}. As expected, the Dice score degrades when increasing $\lambda$ values, since a greater importance is then given to identity obfuscation compared to segmentation. Nevertheless, the segmentation performance on encoded images is still sufficient for many medical applications, especially when using $\lambda$\,=\,1 or $\lambda$\,=\,3. Although the definition of suitable performance is application-dependent, some authors have reported DSC values of 70\% \cite{363096,article1,pmid23336255,article2} or 80\% \cite{Mattiucci2013AutomaticDF} as threshold for clinically-acceptable segmentations. For $\lambda$\,=\,1, the overall difference in DSC compared to segmentation of non-encoded images is only 3.6\%, an impressive result considering that subject identity information is largely removed in encoded images for this $\lambda$ value.

Anatomical information capturing the shape of brain regions and cortical folds (i.e., sulci) can be predictive of both subject identity and segmentation contours. This can be observed in Fig.~\ref{fig:impact-lambda} (first row), where small anatomical details are visible in the encoded images, especially for $\lambda$\,=\,1. To obfuscate identity, the encoder must therefore produce strong noise and artifacts that dominate this morphological information. The impact of this noise in encoded images can be seen in the last row of Fig.~\ref{fig:impact-lambda}, in which the histogram of MS-SSIM scores between different-subject images (i.e., inter-subject) is pushed towards the one for same-subject images (i.e., intra-subject).
} 


\subsubsection{Generalization to new dataset}

\update{
In previous experiments, we considered the scenario where the encoder and segmentation network are trained once with some available data, and then clients send encoded images to the server for segmentation. However, this approach may fail when trying to process images different from those seen in training, for instance, coming from another hospital or acquired with different parameters. In this section, we show that \revision{the privacy-preserving encoder learned with a given dataset can be used to encode images from another dataset, and that these encoded images are useful to update the segmentation network.} 

To test this configuration, we consider the same encoder as before, which was trained using the longitudinal data from the PPMI dataset, and use it as an identity obfuscation module for clients with other data. To simulate this other data source, we used images from the MRBrainS dataset which were acquired with a different acquisition protocol than PPMI and have three labels instead of five, i.e., WM, GM, and CSF.} 
We first tested on MRBrainS images our model pre-trained with PPMI data. In order to match the three-class ground-truth, we merged the CSF int. and CSF ext. outputs into a single CSF class, and the GM and nuclei outputs into a single GM class. Results for different $\lambda$ values are shown at the top of Table~\ref{table:MRBrainsSegmentationResults}. As expected, these results are slightly lower than those on PPMI (see Table~\ref{table:SegmentationResults}).
\revision{Interestingly, we observe a small improvement in the overall Dice when using encoded images, compared to the baseline network trained with non-encoded PPMI images. This suggests that the system trained with adversarial loss generalizes better to new data than the baseline segmentation network, despite the heavy distortion of encoded images. This improved generalization of our method is due to optimizing the encoder for both segmentation accuracy and identity obfuscation, which possibly removes site-related variability (e.g., intensity distribution, tissue contrast, etc.) from encoded images.
}

As can be seen at the bottom of Table~\ref{table:MRBrainsSegmentationResults}, segmentation accuracy improves when the segmentation network is retrained on MRBrainS data following a distributed learning schedule. This shows that the segmentation network of our system can be updated, even after being deployed onto a cloud server. Segmentation maps as well as encoded MRBrainS images are given in Fig~\ref{fig:mrbrains}. \revision{However, like other deep learning methods, when the discrepancy between domains becomes too large (i.e., different image modalities), the system would need to be retrained end-to-end.}



\begin{table}[tp]
\centering
\caption{Segmentation Dice score on the MRBrainS dataset for different values of $\lambda$. (Top) the CNNs have not been retrained while (Bottom) the segmentation network has been retrained following a distributed learning approach.}
\label{table:MRBrainsSegmentationResults}
\small
\begin{tabular}{ll|ccc|c}
    \toprule
    & &  GM & WM & CSF & Overall\\    
    \midrule
    \multirow{5}{*}{No retrain}   
    & Non-enc & 0.742 & 0.805 & 0.778 & 0.783\\
    & $\lambda$\,=\,1 & 0.768 & 0.822 & 0.804 & 0.796 \\
    & $\lambda$\,=\,3 & 0.767 & 0.852 & 0.798 & 0.804 \\
    & $\lambda$\,=\,10 & 0.757 & 0.798 & 0.768 & 0.772 \\
    & $\lambda$\,=\,100 & 0.499 & 0.464 & 0.648 & 0.537 \\
    \midrule
    \multirow{5}{*}{Retrain} 
    & Non-enc & 0.832 & 0.866 & 0.840 & 0.845 \\
    & $\lambda$\,=\,1 & 0.819 & 0.827 & 0.823 & 0.821 \\
    & $\lambda$\,=\,3 & 0.794 & 0.807 & 0.831 & 0.814 \\
    & $\lambda$\,=\,10 & 0.780 & 0.747 & 0.797 & 0.790 \\
    & $\lambda$\,=\,100 & 0.605 & 0.360 & 0.572 & 0.586 \\
    \bottomrule
\end{tabular}
\end{table}

\begin{figure*}[tp]
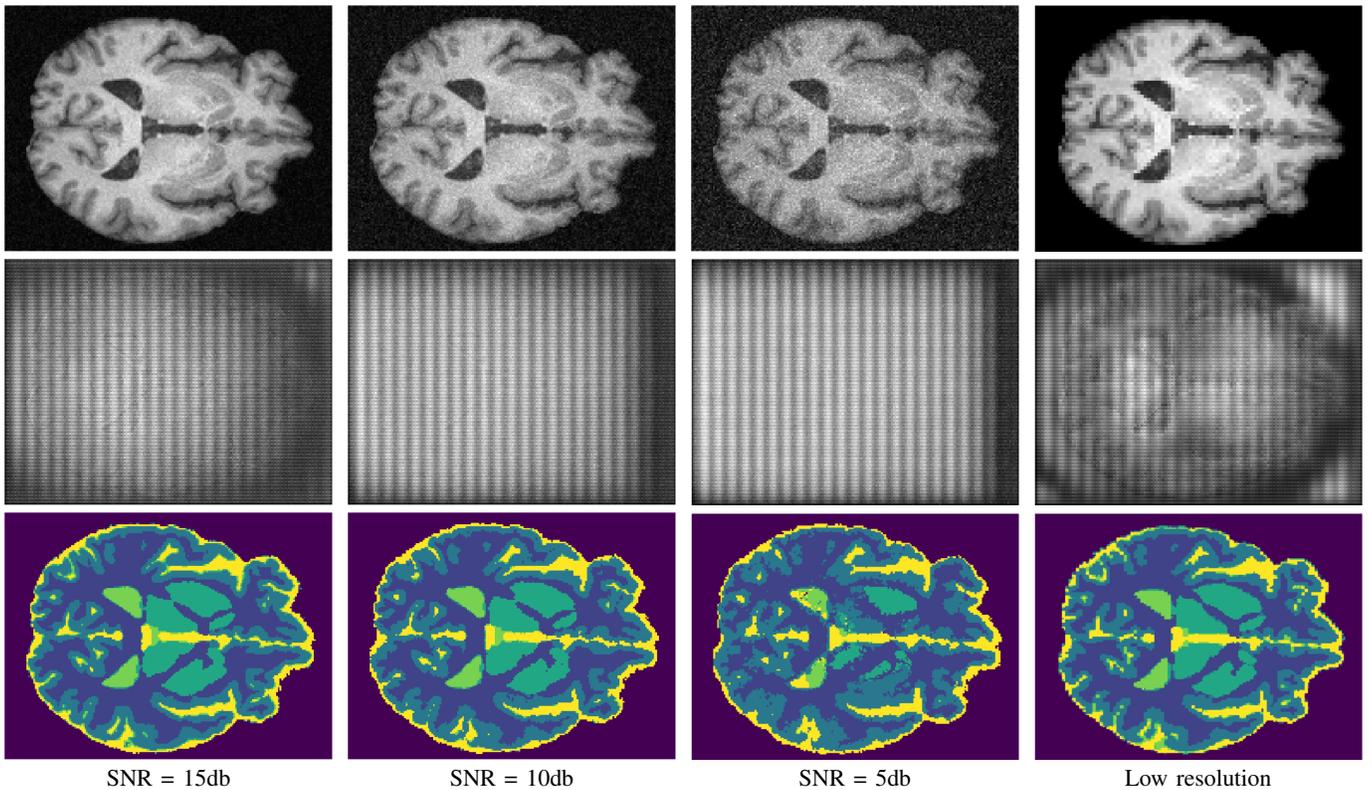

    \centering
    \begin{small}
    \setlength{\tabcolsep}{3pt}
    \begin{tabular}{cccc}
        \robust{15db.png} &
        \robust{10db.png} &
        \robust{5db.png} &
        \robust{lowres.png} \\
        \robust{15db_enc.png} &
        \robust{10db_enc.png} &
        \robust{5db_enc.png} &
        \robust{lowres_enc.png} \\
        \robust{15db_seg.png} &
        \robust{10db_seg.png} &
        \robust{5db_seg.png} &
        \robust{lowres_seg.png} \\
        SNR = 15db & SNR = 10db & SNR = 5db & Low resolution
    \end{tabular}    
    \end{small}
    \vspace{.1cm} 
    \caption{Segmentation with different noise level and a lower resolution setting. (Top row): Degraded images, (Middle row): Encoded images, (Bottom row): Segmentation Results}
    \label{fig:Robustness}
\end{figure*}

\subsubsection{Robustness analysis}

To make sure that our system does not work only on high-quality images such as those of PPMI, we performed a robustness analysis where we trained our method on the original PPMI dataset (with $\lambda$\,=\,1) and tested it on noisy versions of the PPMI test images or on images with a lower 2\ttimes2\ttimes2 cm$^3$ resolution. Results are provided in Fig.~\ref{fig:Robustness} and Table~\ref{table:Robustness}. As can be seen, the encoding and segmentation is not much affected by noise. The segmentation accuracy is close to the one obtained on the original PPMI test set, with overall Dice scores around 80\% for noisy images with an SNR of 15db or 10db. Reducing image resolution seems to induce more significant changes in the encoding, however the segmentation network appears robust to these changes.

\update{The robustness of our model to noise and resolution can be explained as follows. Due to the adversarial optimization between the encoder and discriminator, the segmentation network sees a wide variety of distortion patterns in encoded images as the encoder tries to fool the discriminator. By forcing the segmentation network to produce the same output for these different distorted inputs, we regularize training and make this network more robust to noise and resolution. This principle is at the core of powerful regularization techniques for semi-supervised learning, such as Virtual Adversarial Training (VAT)~\cite{miyato2018virtual}. 
}

\begin{table}[tp]
\centering
\caption{Segmentation Dice score on PPMI dataset with different levels of Rician noise (measured in dB) and low resolution setting.}
\label{table:Robustness}
\small\setlength{\tabcolsep}{3pt}
\begin{tabular}{ l|ccccc|c}
    \toprule
    &  GM & WM & Nuclei & CSF int. & CSF ext. & Overall \\
    \midrule
    Noise (15\,dB)~ & 0.921 & 0.818 & 0.572 & 0.585 & 0.743 & 0.808 \\
    Noise (10\,dB)~ & 0.917 & 0.804 & 0.566 & 0.582 & 0.696 & 0.797 \\
    Noise (5\,dB)~ & 0.821 & 0.706 & 0.514 & 0.472 & 0.331 & 0.669 \\
    \midrule
    Low res. & 0.881 & 0.781 & 0.552 & 0.516 & 0.683 & 0.759 \\
    \bottomrule
\end{tabular}
\end{table}

\subsubsection{Dimension of encoded images}

\textcolor{black}{By default, our encoder outputs a one-channel feature map (c.f., image $G(x_i)$ in Fig.~\ref{fig:system}). Our motivation for using a single channel is that the encoder should preserve the amount of information while transforming the input. 
To validate this hypothesis, we repeated the same experiment but with a larger number of channels for the encoded images. The intra-subject and inter-subject prediction results are reported in Table~\ref{table:multichannels-classification} and the segmentation Dice scores are in Table~\ref{table:multichannels-segmentation}. 
These results show that, even though the segmentation performance slightly increases when encoding images with multiple channels, privacy preservation is compromised, as shown by the significant increase in MS-SSIM scores, discriminator accuracy ($D_{\mr{adv}}$) and mAP.}

\begin{table}[tp]
\centering 
\caption{Intra\update{-subject} and inter-subject prediction accuracy on test examples using different numbers of channels in encoded images.}
\label{table:multichannels-classification}
\small\setlength{\tabcolsep}{5pt}
\begin{tabular}{ c|cc|c}
    \toprule
    Num. of channels &  MS-SSIM & $D_{\mr{adv}}$ & mAP \\
    \midrule
    1 & 0.564 & 0.520 & 0.189 \\
    2 & 0.658 & 0.557 & 0.230 \\
    3 & 0.642 & 0.541 & 0.216 \\
    \bottomrule
\end{tabular}
\end{table}

\subsubsection{Advantage of a Siamese discriminator}

\color{black}To assess the benefit of using a Siamese discriminator in our model, we replaced it by a multi-class classifier with the same pre-trained DenseNet CNN backbone as the Siamese network. In this new model, each subject is given a different class ID and the multi-class discriminator has to predict the class ID of encoded images. Unlike for the Siamese network, training this classifier requires to have images from the same subject in both the training set tand validation set. Hence, we divided the original dataset into a new training set and validation set, and used this new split to retrain the whole system. Multi-class cross-entropy was used as loss function for the discriminator. Since the number of classes is not known in advance (i.e., new subjects can be added to the system after training) and the number of samples per class is very limited (1--4 images per subject), the classifier's output cannot be used directly to evaluate its ability to identify new subjects in testing. Instead, we considered the same out-of-sample strategy as with the Siamese discriminator, and used the features obtained from the last convolutional layer as an embedding for a nearest-neighbor retrieval analysis.


Results in Table~\ref{table:multi-classes} show that the multi-class classifier is worse than the Siamese network at recovering identity in non-encoded images, i.e., 0.442 vs 0.850 in terms of mAP. However, when used to train the entire network, the patient identity is more easily recovered when the images are encoded with the multi-class classifier (0.360) than by the Siamese network (0.189). This motivates the use of  a Siamese network as discriminator in our model.

\color{black}


\subsubsection{Top-k retrieval analysis on segmentation maps}\label{sec:retrieval-seg}

\begin{table}[tp]
\centering
\caption{\textcolor{black}{Top-k retrieval analysis results (mAP) using segmentation maps}}
\label{table:seg_id}
\small\setlength{\tabcolsep}{3pt}
{\color{black}
\begin{tabular}{l|c|c|c|c}
    \toprule
    & $\lambda$\,=\,1 & $\lambda$\,=\,3 & $\lambda$\,=\,10 & $\lambda$\,=\,100 \\
    \midrule
    Full-size segmentation maps &
    0.826 & 0.811 & 0.804 & 0.592\\
    64\ttimes64\ttimes64 patches & 0.716 & 0.683 & 0.697 & 0.398 \\
    32\ttimes32\ttimes32 patches & 0.632 & 0.624 & 0.583 & 0.240  \\
    \bottomrule
\end{tabular}
}
\end{table}

\update{

The goal of our privacy-preserving method is to segment medical images on a server without having to provide sensitive information about the subject. For example, sending non-encoded images could reveal the gender and age of a subject, or if this subject suffers from a neurological disease/disorder like Alzheimer's. Our method achieves this by distorting the input image with noise patterns so that 1) visual interpretation is nearly impossible and 2) identity cannot be recovered easily. However, because we encode the input, but not the output segmentation, an attacker still has access to some information that can help determine the subject's identity and condition.     

In the next experiment, we evaluate whether subject identity can be recovered from the segmentation network's output, using a top-k retrieval analysis similar to the one presented in Section \ref{sec:adversarial-res}. For this analysis, we suppose that an attacker compares the segmentation map of a test image against those of training images, and identifies the subject as the one corresponding to the most similar image. Since a direct pixel-to-pixel comparison is highly sensitive to transformations such as translation, scaling and rotation, as in the previous retrieval analyses, we instead use the representation of a Siamese discriminator as feature vector for matching.      

Results of are reported in Table \ref{table:seg_id}. The first row gives the mAP when using whole-image segmentation maps, for different values of $\lambda$. Parameter $\lambda$ affects the retrieval indirectly since a higher value leads to a less accurate segmentation and, thus, a noisy representation for matching. For a small $\lambda$\,=\,1, we get an mAP of 0.826 similar to the one of 0.850 obtained for non-encoded images (c.f., Table~\ref{table:ClassificationResults}). This shows that the geometry of segmented brain structures is informative of subject identity. However, increasing $\lambda$ to the high value of 100 leads to an important drop in mAP to 0.592 caused by the poor segmentation resulting from this setting.    

As mentioned in Section \ref{sec:implementation}, an important advantage of our method is that encoded images can be cut in small 3D patches which are sent to the server in a random order for processing. This is possible because the segmentation network requires to segment an image one patch at a time. Once the client receives the segmentation output for each patch from the server, it can recover the full-size segmentation map by assembling patches following the same random order. 

We assume that reassembling the segmentation maps from randomly-permuted patches is challenging, and that potential attackers instead try to identify the subject's identity by matching patches against a database of previously seen patches. Based on this assumption, we repeat the same top-k retrieval analysis as before, except we now match the Siamese network representation of test patches with those of training patches. The second and third rows of Table~\ref{table:ClassificationResults} give retrieval mAP when employing patches of size 64\ttimes64\ttimes64 and 32\ttimes32\ttimes32, respectively. We observe that retrieval rates substantially degrade when sending encoded patches to the server, with respective mAP drops of 0.110 and 0.194 for patches of size 64\ttimes64\ttimes64 and 32\ttimes32\ttimes32, when using $\lambda$\,=\,1. This indicates that the limited spatial context of patches, compared to whole segmentation maps, renders more difficult the identification of subjects. While mAP values stay relatively similar for $\lambda$\,=\,1--10, they sharply decrease for $\lambda$\,=\,100 due to the poor segmentation obtained with this setting.

\revision{Although the non-encrypted segmentation map can be seen as a security weakness, the results in able~\ref{table:seg_id} show that the retrieval accuracy is greatly reduced when the client sends shuffled patches for segmentation. Moreover, our method not only removes patient identity from the image, but also scrubs out most of its content. Hence, even if the identity of a patient was to be recovered via the inspection of the segmentation map, all an attacker would have access to are encoded image patches and not the original non-encoded MRI.}

}




\subsubsection{Runtime analysis}

In terms of runtime, an average of 0.08 seconds is required to encode an image on a NVIDIA GTX 1080Ti, whereas the entire segmentation process requires around 0.1 seconds per 3D MRI image. This runtime is negligible compared to the 8 to 12 hours required by Freesurfer. 

\section{Discussion and conclusion}

We presented a novel framework which integrates an encoder, a segmentation CNN and a Siamese network to preserve the privacy of medical imaging data. Experimental results on two independent datasets showed that the proposed method can preserve the identity of a patient while maintaining the performance on the target task.  While this is an interesting application \textit{per se}, it opens the door to appealing potential uses. For example, this approach can be integrated in a continual learning scenario trained on a decentralized dataset, where images have to be shared across institutions but privacy needs to be preserved. From a clinical perspective, obfuscating visual data in addition to current anonymization techniques may foster multi-centre collaborations, resulting in larger datasets as well as more complete and heterogeneous clinical studies.

\begin{table}[tp]
\centering
\caption{Segmentation Dice score on encoded images with different number of channels.}
\label{table:multichannels-segmentation}
\small\setlength{\tabcolsep}{3pt}
\begin{tabular}{ c|ccccc|c}
    \toprule
    Nb channels &  GM & WM & Nuclei & CSF int. & CSF ext. & Overall \\
    \midrule
    1 & 0.925 & 0.824 & 0.580 & 0.598 & 0.752 & 0.812 \\
    2 & 0.935 & 0.851 & 0.621 & 0.633 & 0.786 & 0.824 \\
    3 & 0.932 & 0.848 & 0.617 & 0.637 & 0.802 & 0.829 \\
    \bottomrule
\end{tabular}
\end{table}

Additionally, we have shown that the proposed privacy-preserving model generalizes well to novel datasets, unlike similar works \cite{chen2018vgan} which cannot generalize to encoded images of subjects not seen in training. This facilitates the scalability of our approach to new datasets or tasks. As preliminary step towards preventing privacy leakage in medical imaging data, this study has however some limitations. For example, the domain shift between employed datasets is not significantly large, since both include MRI images of adult brains (even though the acquisition protocols and parameters across scanners differ). Although similar domain shift has resulted in a performance degradation in segmentation networks \cite{dolz20183d}, results demonstrate the good generability of the proposed method in these cases. Future investigations will explore the generalization capabilities of the trained encoder on datasets where the domain shift is larger, for example, between infant and adult brains or even between different image modalities such as MRI and CT.

\begin{table}[tp]
\centering
\caption{Image-retrieval performance (mAP) of the multi-class classifier and Siamese discriminator.}
\label{table:multi-classes}
\small\setlength{\tabcolsep}{5pt}
    \begin{tabular}{l|ccc|cc}
        \toprule
        & \multicolumn{3}{c|}{Multi-class classifier} & \multicolumn{2}{c}{Siamese discr.} \\
        \cmidrule(l{6pt}r{6pt}){2-4} \cmidrule(l{6pt}r{6pt}){5-6}
        & Train. & Valid. & Test & Train. & Test \\
        \midrule
        Non-encoded~ & 0.712 & 0.478 & 0.442 & 0.894 & 0.850 \\
        Encoded & 0.404 & 0.324 & 0.360 & 0.153 & 0.189 \\
        \bottomrule
    \end{tabular}
\end{table}

\section*{Acknowledgment}

We acknowledge the support of the Natural Sciences and Engineering Research Council of Canada (NSERC) as well as the Reseau de BioImagrie du Quebec (RBIQ) and thank NVIDIA corporation for supporting this work through their GPU grant program.

\bibliographystyle{IEEEtranE}
\bibliography{IEEEabrv,egbib}
\end{document}